\newtheorem{theorem}{Theorem}[section]
\newtheorem{lemma}[theorem]{Lemma}
\newtheorem{prop}[theorem]{Proposition}
\newtheorem{remk}[theorem]{Remark}
\newtheorem{defn}[theorem]{Definition}
\newtheorem{corr}[theorem]{Corollary}
\newcommand{\eqv}{\doteq}
\newcommand{\Wset}{W}
\newcommand{\spann}{\mathrm{span}}
\newcommand{\sbnd}{\,\ldots\,}
\newcommand{\nbas}{S}
\newcommand{\abas}{\Lambda}
\newcommand{\nebas}{\Wset_2\setminus \{\varepsilon\}}
\newcommand{\nebasr}{\Wset_r\setminus \{\varepsilon\}}
\title[A simple algorithm for checking equivalence of counting functions]{A simple algorithm for checking equivalence \\ of counting functions on free monoids}
\author{Petr Kiyashko}
\address{\parbox{\linewidth}{Moscow Institute of Physics and Technology, Institutskii per. 9,\\ 
141701 Dolgoprudny, Moscow region, Russia}}
\email{pskiyashko@phystech.edu}
\author{Alexey Talambutsa}
\address{\parbox{\linewidth}{Steklov Mathematical Institute of RAS, 8 Gubkina St., 119991 Moscow, Russia\\
HSE University, Laboratory of Theoretical Computer Science, \\ 11 Pokrovsky Blvd., 109028 Moscow, Russia}}
\email[]{altal@mi-ras.ru }
\subjclass[2020]{20M05, 05-08, 18H10}
\begin{document}

\begin{abstract}
    In this note we propose a new algorithm for checking whether two counting
    functions on a free monoid $M_r$ of rank $r$ are equivalent modulo a bounded function. The previously known algorithm has time complexity $O(n)$ for all ranks $r>2$, but for $r=2$ it was estimated only to be $O(n^2)$. We apply a new approach based on the explicit basis expansion and summation of weighted rectangles, which allows us to construct a much simpler algorithm with time complexity $O(n)$ for any $r\geq 2$. We work in the multi-tape Turing machine model with nonconstant-time arithmetic operations.
\end{abstract}

\maketitle

\section{Introduction}
Let $A_r = \{a_1, a_2, \dots, a_r\}$  be an alphabet consisting of $r$ letters, where $r\geq 2$. The free monoid $M_r$ of rank $r$ can be defined as the set of all finite words over $A_r$ including the empty word $\varepsilon$ with the standard operation of word concatenation. The length of a word $w$ is denoted by $|w|$. A word $v\in M_r$ is a \emph{subword} of the word $u\in M_r$ if there exist $x,y\in M_r$ such that $u=xvy$, and a product decomposition of this form constitutes \emph{an occurrence} of $v$ in $u$. Two occurrences $u=x_1 v y_1$ and $u=x_2 v y_2$ \emph{intersect} if $||x_1|-|x_2||\leq |v|$.

Algorithms that find occurrences of a given subword (or a set of subwords) in another word are ubiquitous in modern data processing. A number of such algorithms working in linear time have been developed since the 1970s, among which the two most famous are probably the Knuth--Morris--Pratt and Aho--Corasick algorithms (for details see \cite{cro-ryt}). In these algorithms the existence of a subword $v$ in the word $w$ is checked by a precomputed finite automaton, which in fact describes the (prefix) subword structure for the word $v$. At the same time, the study of subwords constitutes the central topic in word combinatorics, an area of theoretical computer science that is closely related to the combinatorial group, semigroup, and ring theory (see \cite{loth} and \cite{sapir}). 

A notable application of subword search combinatorics in group theory is a construction of Brooks counting quasimorphisms on the free group $\mathbb F_k$, which enabled the study of many questions on the second bounded cohomology group $H^2_b(\mathbb F_k,\mathbb R)$ of the free group. This approach was later generalized to many other important classes of groups (see \cite{brooks,grigorch,eps-fuji,bes-fuj}). A counting quasimorphism on a free group is defined as a symmetrization of the counting function on a free monoid; this function is the following quantitative generalization of the subword occurrence function. 

For any fixed word $v\in M_r$ we define the \emph{elementary counting function} $\rho_v:M_r\to \mathbb Z$. The value $\rho_v(w)$ for a word $w \in M_r$ equals the number of all (possibly intersecting) occurrences of the subword $v$ in the word $w$.
%The function $\rho_{\varepsilon}$ is defined as $\rho_{\varepsilon}(w) = |w|$.
A \emph{counting function} on $M_r$ is a finite linear combination of the form
\begin{equation}
\label{counting_function}
f=\sum_{i=0}^k \alpha_i\rho_{w_i}.
\end{equation}
Here, all words $w_i$ are elements of a fixed monoid $M_r$ and the coefficients $\alpha_i$ are usually taken from the field of real numbers $\mathbb R$ or the ring of integers $\mathbb{Z}$. We see that in the first case the counting functions on $M_r$ form a linear space, which we denote by ${\mathcal C}_r$, and in the second case the counting functions form a $\mathbb Z$-module inside ${\mathcal C}_r$. 

Two counting functions $f,g\in {\mathcal C}_r$ are considered equivalent if their difference $f-g$ is a bounded function on $M_r$, i.e., there exists some constant $c>0$ such that for any word $w\in M_r$ it holds that
\[|f(w)-g(w)|\leq c.\]
We will use the symbol $\eqv$ to denote this equivalence relation\footnote{Notably, as was shown in \cite[Corollary A.5]{ggd}, the equivalence $f\eqv g$ takes place if and only if the counting functions $f$ and $g$ coincide for counting in the cyclic words, i.e., when the occurrences of the words $w_i$ in the sum \eqref{counting_function} are taken from a cyclic word $w$ in the alphabet $A_r$.}. In addition, we adopt the standard notation $\ell^{\infty}(M_r)$ for the space of bounded functions on the monoid $M_r$. 

\smallskip

Clearly, the equivalence classes in $\mathcal C_r$ form a quotient linear space  
\[\widehat{\mathcal C}_r=\mathcal{C}_r/(\ell^{\infty}(M_r)\cap \mathcal{C}_r).\] 
The combinatorial study of this space in \cite{ggd} enabled to construct a polynomial-time algorithm for checking whether a given counting function is bounded, which was then extended to an algorithm for fast summation of counting quasimorphisms in $H^2_b(\mathbb F_k,\mathbb R)$. For integer coefficients and free group $\mathbb F_k$ with $k\geq 2$ or free monoids $M_r$ with $r\geq 3$ these algorithms were shown in \cite{ht-sbornik} to have complexity $O(n)$. For rational coefficients in the same cases, one has complexity $O(n\log n)$. The main goal of this paper is to show that these complexities can also be achieved in the case of $r=2$. We do this by proposing a new algorithm, which uses a new approach to studying the space $\widehat{\mathcal C}_r$ from \cite{alt}.

\smallskip

To specify the input of the algorithm in the case of integer coefficients we present an element of $\mathcal{C}_r$ in the input as a formal sum \eqref{counting_function}, where the coefficients $\alpha_i$ are written in binary form, the words $w_i$ are written in the alphabet $A_r$ and the summands are written as
pairs of a coefficient and a word. In the case of rational coefficients, we assume that the coefficients are written as fractions, where the numerator and denominator are written in binary form. (The  formalization of the input in the integer and rational cases can be found in \cite{ht-sbornik}.) This specification defines the size $\|f\|$ of the counting function $f$ provided as input.

\smallskip

We emphasize that the words $w_i$ in different summands may not be distinct: the transformation of a formal
sum into an equivalent one by combining coefficients at summands with equal words is not trivial.

\smallskip

Now we state the main result obtained in this paper.

\begin{theorem} \label{main}
    There exists an algorithm that takes as input two counting functions $f$~and~$g$ represented by linear combinations of elementary counting functions over the monoid $M_r$ and checks whether they are equivalent.
    %\\
    Furthermore, this algorithm has time complexity 
    $O(r^3 n)$ for integer coefficients and $O(r^3n\log(rn))$ for rational coefficients, where $n=\|f\| + \|g\|$ is the input size.
\end{theorem}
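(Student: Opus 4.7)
My plan is to reduce $f \equiv g$ to checking that every coefficient in the basis expansion of $h := f - g$ with respect to the basis $B$ of Theorem~\ref{basis_thm} vanishes. First, I compute $h$ by negating $g$'s coefficients, then apply Lemma~\ref{lemm:normtime} to merge identical words, obtaining $h = \sum_i x_i \rho_{w_i}$ with pairwise distinct $w_i$ in $O(rn)$ time. A single linear pass then parses each $w_i$ into its canonical core decomposition $w_i = a_1^{p_i} u_i a_1^{q_i}$, where the core $u_i$ is either empty or starts and ends with a letter other than $a_1$.

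Next I work out the explicit basis expansion of each $\rho_{w_i}$ by iterating the left and right extension relations of the first theorem. For $u_i \ne \epsilon$ this yields
\[
\rho_{w_i} \equiv \rho_{u_i}
- \sum_{j=0}^{p_i-1}\sum_{s\ne a_1}\rho_{s a_1^j u_i}
- \sum_{k=0}^{q_i-1}\sum_{t\ne a_1}\rho_{u_i a_1^k t}
+ \sum_{\substack{0\le j<p_i\\0\le k<q_i}}\sum_{s,t\ne a_1}\rho_{s a_1^j u_i a_1^k t},
\]
so $\rho_{w_i}$ contributes the signed indicator of the rectangle $[0,p_i)\times[0,q_i)$ to each basis element $\rho_{s a_1^j u_i a_1^k t}$, uniformly in $s,t \ne a_1$. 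For $u_i = \epsilon$ a separate induction yields
\[
\rho_{a_1^{\ell_i}} \equiv \rho_\epsilon - \ell_i \sum_{s\ne a_1}\rho_s + \sum_{k=0}^{\ell_i-2}(\ell_i-1-k)\sum_{s,t\ne a_1}\rho_{s a_1^k t},
\]
whose weights depend polynomially on $\ell_i$ and $k$.

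To certify that all aggregate coefficients vanish without ever enumerating the $\Theta(r^2|w_i|^2)$ individual terms in each expansion, I group the normalized inputs by $u_i$ and run a weighted-rectangle summation within each group: by second-order discrete differencing in $(j,k)$, the pointwise vanishing of $\sum_i x_i\mathbf{1}_{[0,p_i)\times[0,q_i)}$ together with its one- and zero-dimensional marginals reduces to the aggregate check $\sum_{i:(p_i,q_i)=(p,q)} x_i = 0$ for each $(p,q)$, which is a single check per stored input triple. The main obstacle is that a generic basis element $\rho_v$ admits up to four decompositions $v = \lambda u' \mu$ with $\lambda \in \{\epsilon\} \cup \{s a_1^j : s\ne a_1, j\ge 0\}$ and $\mu$ symmetric, so its coefficient mixes contributions from up to four distinct $u$-groups; additionally, the $u = \epsilon$ group contributes with polynomial weights to the short-core basis elements $\rho_\epsilon, \rho_s, \rho_{s a_1^k t}$. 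I resolve both issues uniformly by precomputing, for each group sorted by its free parameter, the cumulative sums $\sum_{p_i\ge P} x_i$, $\sum_{q_i\ge Q} x_i$, $\sum_{\ell_i\ge L} x_i$, and $\sum_{\ell_i\ge L} x_i \ell_i$, so that every cross-group consistency check at a basis element $\rho_{s a_1^j u a_1^k t}$ is an $O(1)$ table lookup. Summing over all groups with the $r^2$ alphabet factor for the choices of $s, t$ and the $r$ factor from normalization yields total runtime $O(r^3(|f|+|g|))$.
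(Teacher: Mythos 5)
Your expansion formulas are correct---the rectangle decomposition of $\rho_{a_1^{p}ua_1^{q}}$ is exactly Lemma \ref{lemm:reprn}, and your closed form for $\rho_{a_1^{\ell}}$ checks out---and the linear-algebra fact you invoke (that pointwise vanishing of a weighted sum of $[0,p)\times[0,q)$ indicators is equivalent to the vanishing of its second-order differences at finitely many corner points) is the same descending-induction argument the paper uses to prove $H\equiv 0\Leftrightarrow h\equiv 0$. So the correctness of your criterion is not in question. The gap is in the complexity claim, and it sits precisely where the paper does something different. You keep the basis $B$ of Theorem \ref{basis_thm} and evaluate the coefficient of each critical basis element $\rho_{sa_1^{j}ua_1^{k}t}$ as a mixture of \emph{range} sums $\sum_{p_i\ge P}x_i$, $\sum_{q_i\ge Q}x_i$, $\sum_{\ell_i\ge L}x_i$ and $\sum_{\ell_i\ge L}x_i\ell_i$ over up to four core-groups. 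Building those tables requires sorting each group by the numeric value of its free parameter and then suffix-summing; in the stated machine model (multi-tape Turing machine with nonconstant-time arithmetic) sorting a list of binary integers of total length $n$ is not a linear-time primitive, and you give no argument for it. The only aggregation primitive available from the literature is the procedure $N$ of Lemma \ref{lemm:normtime}, which collapses coefficients at \emph{identical} terms, not over ranges. Relatedly, your intermediate claim that the check ``reduces to $\sum_{i:(p_i,q_i)=(p,q)}x_i=0$, a single check per stored input triple'' is not right as stated: the second difference at $(p,q)$ also picks up the bar, singleton and $u=\epsilon$ contributions you enumerate afterwards, so each check is a four-group range query rather than an equality count, which is exactly why you need the sorted tables.

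The paper sidesteps all of this by changing basis instead of evaluating coefficients. Every input term is rewritten, by $O(1)$ local substitutions, into at most $r^{2}$ terms of the new basis $B'=B_U'\cup B_V'$ (Lemma \ref{lemm:invn}, Proposition \ref{prop:basn}): the $(0,0)$-based rectangles $\sigma_k^m(v)_{s_1s_2}$ \emph{are} basis elements under a compressed interval encoding of size $O(\|w\|)$, squares and bars are differences of at most four of them, and---crucially---$\rho_{a_1^{\ell}}$ is itself an element of $B_U'$, so the polynomial-weight expansion you wrote down never has to be materialized or range-queried. After the rewrite the only remaining task is collapsing coefficients at syntactically identical encoded terms, which $N$ does in $O(rn)$. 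To repair your proof you would either have to show that the numeric sorting and cumulative summation can be done in $O(rn)$ on a multi-tape machine, or replace the lookup tables by this basis change and let $N$ finish.
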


The previous algorithm for solving this problem, which was constructed in \cite{ht-sbornik}, had two disadvantages: the description of the algorithm was quite complicated, and its complexity was proven to be linear only for $r\geq 3$. Even though the same algorithm works in the case of $r=2$, the complexity was only estimated as $O(n^2)$ for integer coefficients and $O(n^2\log n)$ for rational coefficients. This discrepancy has the following origin. For a counting function of size $n$ the algorithm makes $O(n)$ major steps, each of which has complexity $O(d)$, where $d$ is the current size of the stored data. Initially $d=n$, and after each major step a particular data compression procedure with compression ratio $p(r)=\max(\frac{2r}{r+2},\frac{2r^2}{3r-2},\frac{r^2}{2r-1})$ is executed. Since $p(r)\geq \frac97$ for all $r\geq 3$, in these cases the data size exponential decreasing along major steps, resulting in total complexity $O(n)$. However, if $r=2$, then $p(2)=1$, and in this case the data may potentially retain the same size along all the steps, which leads only to the estimation $O(n^2)$ (and $O(n^2\log n)$ respectively). 
%The main result of this paper resolves this issue by constructing a much simpler algorithm, which works in linear time for any $r\geq 2$.

\smallskip

Instead of constructing an algorithm for Theorem~\ref{main} directly, we will be looking at the algorithms that check whether a function is bounded.

\begin{theorem} \label{mainb}
    There exists an algorithm that takes as input a counting function $f$ represented by a linear combination of elementary counting functions over the monoid $M_r$ and checks whether $f$ is bounded.
    Furthermore, this algorithm has time complexity 
    $O(r^3\|f\|)$ for integer coefficients and time complexity $O(r^3\|f\|\log(r\|f\|))$ for rational coefficients, where $\|f\|$ is the input size.
\end{theorem}

The existence of an algorithm satisfying Theorem~\ref{main} easily follows from Theorem~\ref{mainb}.

\smallskip

%\begin{proof}
%    Let $A$ be an algorithm which satisfies Theorem~\ref{mainb}. First, we calculate the difference function $h = f - g$,
    %which can be trivially done in linear time, and then apply $A$ to $h$, outputting the result.
    %Since $h$ being bounded is equivalent to $f \eqv g$, the constructed algorithm indeed correctly
    %checks equivalence. Furthermore, the size of $h$ cannot exceed $|f| + |g|$ for integer coefficients,
    %thus the running time of the algorithm is
    %\[O(|f| + |g|) + O(r^3|h|) = O(|f| + |g|) + O(r^3(|f| + |g|)) = O(r^3(|f| + |g|)).\]
%\end{proof}

Besides the described algebraic motivation, there is another topic closely related to counting functions on monoids. Recently, a number of results were obtained in \cite{abk,clmpw} on the boundedness problem for weighted automata (WA) and cost-register automata (CRA). These two classes of automata generalize probabilistic automata and carry additional computational information on transitions so that for any word $w$ from the automata alphabet $\Sigma$ a number $A(w)\geq 0$ can be computed. Given an automaton as input, the \emph{absolute boundedness problem} asks if there exists a constant $c>0$ such that for any $w\in \Sigma^*$ one has $|A(w)|\leq c$. The question whether a monoid counting function is bounded can be directly reduced to the absolute boundedness decision problem for linearly ambiguous WA and for the restricted model of copyless CRA. It was proven that the absolute boundedness problem is PSPACE-complete for polynomially ambiguous WA with integer coefficients (see \cite[Theorem~6.22]{abk}), and for copyless CRA it was proved to be decidable in polynomial time for non-negative coefficients \cite[Theorem~3.3]{clmpw}. However, restricting the coefficients to be non-negative does not allow a direct emulation of the monoid counting function anymore, since any counting function $\sum_{i=0}^k \alpha_i\rho_{w_i}$ with non-negative coefficients $\alpha_0,\ldots,a_k$ is clearly unbounded unless all these coefficients are zeros.

%The natural construction of the corresponding nondeterministic WA is linearly ambiguous,
%\footnote{The model of WA uses products of edge numbers, and in order to compute sums we need to consider $\exp(f)$, where $f$ is our counting function.}
%and for such automata the boundedness problem is only conjectured to be decidable (see \cite[Figure 2]{clmpw}). Yet, the limited model of copyless CRA allows to process the argument word $x$ and count number of prefixes for words $w_i$ in the sum \eqref{counting_function}, then the polynomial time complexity of equivalence problem for counting functions can be deduced from \cite[Theorem~3.3]{clmpw}.

\smallskip

\section{Preliminary results}

It is not hard to see that for any fixed word $w\in M_r$ the functions 
\begin{equation}
\label{lr-functions}
\begin{aligned}
    l_w = \rho_w - \sum_{s \in A_r}\rho_{sw},\\
    r_w = \rho_w - \sum_{s \in A_r}\rho_{ws}
\end{aligned}
\end{equation}
are bounded, as for any argument word $x\in M_r$ one has $|l_w(x)|\leq 1$ and $|r_w(x)|\leq 1$. Indeed, any occurrence of the word $w$ in the word $x$ besides the prefix can be uniquely extended to an occurrence of the word $sw$ by the adjacent letter $s\in A_r$. Symmetrically, any occurrence of the word $w$ in the word $x$ besides the suffix is in one-to-one correspondence with its continuation $ws$ by some letter $s$. Since the functions $l_w$ and $r_w$ are bounded, they vanish in the quotient space $\mathcal{\widehat C}_r$.
This explains why $l_w$ and $r_w$ are called \emph{the left and right extension relation functions} respectively. 

\smallskip

A key structural insight, established in the work \cite{ggd}, shows that the functions \eqref{lr-functions} can be taken as the defining relations of the factor space over the bounded functions:

\begin{theorem} [Theorem~1.3 in \cite{ggd}, Theorem~1.4 in \cite{alt}]
    The subspace of bounded functions in $\mathcal C_r$ (i.\ e.\ ones equivalent to $0$ in the quotient space $\widehat{\mathcal C}_r$) is spanned by the left and right extension relation functions $l_w$ and $r_w$ taken for all words $w \in M_r$.
\end{theorem}

From this point on, for brevity we will also use counting functions to denote the equivalence classes they represent in $\widehat{\mathcal C}_r$. We will also say that \emph{applying the left (or right) extension relation} to an elementary counting function $\rho_{w}$ with $w = a_i v$ or $w = va_j$ means substituting $\rho_{w}$ with the linear combination
\[\rho_{v} - \sum_{s \in S \setminus \{a_i\}}\rho_{sv} \ \ \ \text{ (left extension)}\] 
and
\[\rho_{v} - \sum_{s \in S \setminus \{a_j\}}\rho_{vs} \ \ \ \text{ (right extension)} \]
respectively. This operation applied to a general counting function does not change its equivalence class because all extension relation functions are bounded. We will use this operation to rewrite the formal sums of type \eqref{counting_function} and obtain their basis representations.

\smallskip

The bases of the space $\widehat{\mathcal{C}}_r$ that consist solely of elementary counting functions can be parametrized using spanning trees of de Bruijn graphs (see \cite[Proposition 3.1]{ggd}), and the easiest such basis is described using the words satisfying the following simple condition:

\smallskip

\begin{defn}
    Let $\Wset_r$ be the set of all words from $M_r$, including the empty word $\varepsilon$, that do not start or end with the letter $a_1$.
\end{defn}

%Note that according to this definition, the set $\Wset_r$ contains the empty word $\varepsilon$.

\begin{theorem} [Theorem~1.5 in \cite{ggd}, Theorem~1.5 in \cite{alt}]
\label{basis_thm}
    A basis of the space $\widehat{\mathcal{C}}_r$ is represented by the set  
    \[B_r = \{\rho_w \ \mid \ w \in \Wset_r \}.\]    
\end{theorem}

\medskip
The basis $B_r$ was used in \cite{alt} to find explicit formulas to decompose all elementary counting functions from $\mathcal{C}_r$ in the basis $B_r$. These formulas allow us to describe a natural and straightforward algorithm to check if the input function $f=\sum_{i=0}^k \alpha_i\rho_{w_i}$ is bounded: we represent all elementary counting functions from the input by linear combinations of the basis elements using the explicit formulas from \cite{alt}, and then combine the coefficients to see whether the result is a trivial combination (see formal details in \cite[Lemma~4.2]{ht-sbornik}). However, the application of this strategy with the basis $B_r$ to the input of size $n$ can produce a sum of counting functions that has length $\Theta(n^3)$, thus the time complexity of such an algorithm will be bounded below as $\Omega(n^3)$. The main technical tool of this paper is a new basis $\nbas_r$, which reduces the length of the produced sums to $O(n)$ and allows us to reach the desired linear time complexity. 

\smallskip

The following result describes the time complexity of an auxiliary formal procedure that sums up the coefficients in a general counting function with repetitions and serves as a tool to combine similar terms in the expressions of the form $\sum_{i=0}^k \alpha_i\rho_{w_i}$, where the functions $\rho_{w_i}$ may coincide.

\begin{lemma} [Lemma~4.2 in \cite{ht-sbornik}]
\label{lemm:normtime}
    There exists a procedure $N$ that takes a formal sum $f=\sum_{i=0}^k \alpha_i\rho_{w_i}$ as input and produces a formal sum obtained from $f$ by reducing the coefficients at summands with identical elementary counting functions. Furthermore, the time complexity of $N$
    is $O(rn)$ for integer coefficients and $O(rn\log(rn))$ for rational coefficients, where $n$ is the input size.
\end{lemma}

Suppose that the function $f$ is represented as a formal sum $f=\sum_{i=0}^k \alpha_i\rho_{w_i}$, where the elementary counting functions $\rho_{w_i}$ belong to some basis $T$ of the space $\widehat{\mathcal{C}}_r$, but may repeat. Applying the procedure $N$ to this sum produces a coordinate decomposition of $f$ in $T$, thus checking whether the function $f$ is bounded amounts to verifying whether the obtained decomposition is zero. The idea of the proposed algorithm is to obtain such a representation of the input function $f$, and then apply the procedure $N$ to it.

\section{Algorithm for $M_2$}
In this section, we describe an algorithm to check boundedness of a counting function
for the case of monoid $M_2$, and in the next section we outline its extension to the general~case.

\subsection{Basis decomposition} \label{sec:decomp}
First, for any elementary counting function $\rho_w$ we will find its explicit decomposition $D(\rho_w)$ in the basis $B_2$, which holds up to equivalence in $\widehat{\mathcal C}_2$. The decomposition formulas will be used extensively in our further considerations. 

\smallskip

First, we consider the general case of the word $w$ having form $w = a_1^kva_1^m$, where $v$ does not start or end with the letter $a_1$. Then we find formulas for the remaining words of the form $w=a_1^k$.

\begin{lemma} \label{lemm:repr}
    For any word $w = a_1^kva_1^m \in M_2$ with $v \in \nebas$
    it holds that $\rho_w \eqv D(\rho_w)$, where  
     \begin{equation}
        \label{key_decomposition}
        D(\rho_w)=\rho_v - \sum_{i = 0 \sbnd k-1}\rho_{a_2a_1^iv} -
        \sum_{j = 0 \sbnd m-1}\rho_{va_1^ja_2} +
        \sum_{\substack{i = 0 \sbnd k-1, \\ j = 0 \sbnd m-1}}\rho_{a_2a_1^iva_1^ja_2} \quad \text{for } k, m > 0 
        \end{equation}
    \begin{equation}
        \label{key_decomposition_k0}
        D(\rho_w)=\rho_v - \sum_{i = 0 \sbnd m-1}\rho_{va_1^ia_2} \quad \text{for } k = 0 \text{ and } m > 0,
    \end{equation}
    \begin{equation}
        \label{key_decomposition_m0}
        D(\rho_w)=\rho_v - \sum_{i = 0 \sbnd k-1}\rho_{a_2a_1^iv} \quad \text{for } k > 0 \text{ and } m = 0,
    \end{equation}
    and $D(\rho_w) = \rho_v = \rho_w$ for $k=m=0$.
\end{lemma}
\begin{proof}
    To establish the equality $\rho_w \eqv D(\rho_w)$ in the case \eqref{key_decomposition}
    we first iteratively apply the left extension relation
    $k$ times to the function $\rho_{a_1^kva_1^m}$ and obtain
    \begin{equation}
    \rho_w \eqv \rho_{va_1^m} - 
    \sum_{i = 0 \sbnd k-1}\rho_{a_2a_1^iva_1^m}.    
    \label{first_equality}
    \end{equation}
    Then, applying the right extension relation $m$ times to the function $\rho_{va_1^m}$,
    we get
    \[\rho_{va_1^m} \eqv \rho_v - 
    \sum_{j = 0 \sbnd m-1}\rho_{va_1^ja_2}.\]
    Finally, applying the right extension relation $i$ times to every element of the sum in \eqref{first_equality},
    %$\sum_{\substack{s \in A_r\setminus\{a_1\}, \\ i = 0 \sbnd k-1}} \rho_{sa_1^iva_1^m},$
    we obtain
    \[\sum_{i = 0 \sbnd k-1}\rho_{a_2a_1^iva_1^m} \eqv
    \sum_{i = 0 \sbnd k-1}\Bigl (\rho_{a_2a_1^iv} - 
    \sum_{j = 0 \sbnd m-1}\rho_{a_2a_1^iva_1^ja_2} \Bigr )\eqv
    \sum_{i = 0 \sbnd k-1}\rho_{a_2a_1^iv} -
    \sum_{\substack{i = 0 \sbnd k-1 \\ j = 0 \sbnd m-1}}\rho_{a_2a_1^iva_1^ja_2},\]
    which yields the desired representation.

\smallskip

The same proof can be carried out in the cases \eqref{key_decomposition_k0} and \eqref{key_decomposition_m0} where $k$ or $m$ is equal to $0$.
First, note that \eqref{first_equality} from the proof of the Lemma~is exactly equivalent
to \eqref{key_decomposition_m0} when $m = 0$. Then, note that if instead of initially applying
the left extension relation $k$ times to $\rho_w$ we applied the right extension relation
$m$ times, we would have obtained
\[
\rho_w \eqv \rho_{a_1^kv} - 
\sum_{i = 0 \sbnd m-1}\rho_{a_1^kva_1^ia_2},
\]
which is in turn equivalent to \eqref{key_decomposition_k0} when $k = 0$.

The case of $k = m = 0$ where the decomposition $D$ is defined as an identity is obvious.
\end{proof}

\begin{lemma} \label{lemm:repr_degenerate}
    For every word $w = a_1^k \in M_2$ it holds that $\rho_w \eqv D(\rho_w)$, where 
    \begin{equation}
        \label{key_decomposition_degenerate}
        D(\rho_{w}) = \rho_{\varepsilon} - k\rho_{a_2} + \sum_{i=0 \sbnd k-2}(k - i - 1)\rho_{a_2a_1^ia_2}.
    \end{equation}
\end{lemma}
\begin{proof}
    %Now, we consider the words of the form $w = a_1^k$. Their decomposition is obtained similarly.
    Iteratively applying the left extension relation $k$ times to the function $\rho_{a_1^k}$, we get
    \[
    \rho_{w}\eqv \rho_{\varepsilon} - \sum_{i=0 \sbnd k-1}\rho_{a_2a_1^i}.
    \]
    Then, applying the right extension relation $i$ times to each function $\rho_{a_2a_1^i}$,
    we obtain 
    \[
    \rho_{a_2a_1^i} \eqv \rho_{a_2} - \sum_{j=0 \sbnd i-1}\rho_{a_2a_1^ja_2}.
    \]
    Finally, summing all of these representations together yields
    \begin{equation*}
    \rho_{w} \eqv \rho_{\varepsilon} - \sum_{i=0 \sbnd k-1}(\rho_{a_2} - \sum_{j=0 \sbnd i-1}\rho_{a_2a_1^ja_2}) \eqv
    \rho_{\varepsilon} - k\rho_{a_2} + \sum_{i=0 \sbnd k-2}(k - i - 1)\rho_{a_2a_1^ia_2}. \qedhere
    \end{equation*}
\end{proof}

%The next Lemma~will allow to extend the definition of $D$ to the words $a_1^k$ with $k \geq 0$.

%\begin{defn}
%    For every word $w = a_1^k \in M_2$ where
%    $k \geq 0$ we
%    define the decomposition
%    \[D(\rho_w) = \rho_{\varepsilon} - k\rho_{a_2} + \sum_{i=0 \sbnd k-2}(k - i - 1)\rho_{a_2a_1^ia_2},\]
%    which is an equivalent representation of $\rho_w$ in terms of functions from $B_r$,
%    thus completing the extending of $D$ to all elementary counting functions in $M_2$.

%% \smallskip

%     For every word $w = a_1^kva_1^m \in M_2$ where
%     $k, m \geq 0$ and $v \in \nebas$ we
%     define the decomposition
%     \[D(\rho_w) = \rho_v - \sum_{i = 0 \sbnd k-1}\rho_{a_2a_1^iv} -
%     \sum_{j = 0 \sbnd m-1}\rho_{va_1^ja_2} +
%     \sum_{\substack{i = 0 \sbnd k-1, \\ j = 0 \sbnd m-1}}\rho_{a_2a_1^iva_1^ja_2},\]
%     which is an equivalent representation of $\rho_w$ in terms of functions from $B_r$.

% \end{defn}

It is easy to see that for any $w\in M_2$ the corresponding decomposition $D(\rho_w)$ from Lemma~\ref{lemm:repr} or Lemma~\ref{lemm:repr_degenerate} is a formal sum of elements in the basis $B_2$. Hence, the direct application of $D$ to the elementary counting functions in each summand of a formal sum representation of the input function allows us to obtain the desired basis representation.

\begin{corr}
    Let $f$ be a counting function in $\mathcal{C}_2$ represented by a linear combination
    $f=\sum_{i=0}^k \alpha_i\rho_{w_i}$. Then $f$ can be represented as the following combination
    of elements of $B_2$:
    \begin{equation}
      f \eqv \sum_{i=0}^k \alpha_i D(\rho_{w_i}).
      \label{d-formula}
    \end{equation}
\end{corr}

\subsection{A naive $O(n^3)$ algorithm} \label{sec:naive}
The decomposition obtained above allows us to construct a naive algorithm that checks whether $f$ is bounded. To do this, we use \eqref{d-formula} and replace each term $\alpha_i\rho_{w_i}$ of $f$ by the summands of $D(\rho_{w_i})$, each multiplied by $\alpha_i$. Then, we apply the procedure $N$ from Lemma~\ref{lemm:normtime} to the obtained formal sum. Since all the summands in the new representation contain elementary counting functions from $B_2$, the function $f$ is bounded if and only if the procedure $N$ yields a trivial linear combination.

\begin{remk} \normalfont \label{long_expr}
The time complexity of the naive algorithm is at least cubic, because the size of $D(\rho_{w})$
can be as large as $|w|^3$. For example, consider $w = a_1^ka_2a_1^k$. Then, due to \eqref{key_decomposition}
it holds that
\[\|D(\rho_w)\| \geq \Bigl \|\sum_{\substack{i = 0 \sbnd k-1 \\ j = 0 \sbnd k-1}}\rho_{a_2a_1^ia_2a_1^ja_2}\Bigr \| .\]
We see that, each summand with $i,j\geq k/2$ has size $\Theta(k)$, and there are $\Theta(k^2)$ such summands, from this one obtains that the total size of $D(\rho_w)$ is $\Theta(k^3) = \Theta(|w|^3)$. Therefore, the size of the basis representation is $\Omega(\|f\|^3)$, and the running time of the procedure $N$ can only be estimated as $O(\|f\|^3)$ for integers or $O(\|f\|^3\log \|f\|)$ for rationals.    
\end{remk}

\subsection{Shorthand basis}

As we saw in Remark \ref{long_expr}, the decomposition of an elementary counting function in the basis $B_2$ may significantly increase the data size, which in turn causes a summation of too many coefficients. To overcome this problem, we will construct another basis $\nbas_2$ which allows us to make short decompositions of the counting functions in the space $\mathcal{\widehat{C}}_2$. 
Clearly, the counting functions which are the sums from \eqref{key_decomposition} allow to compress the decomposition $D$, but the key point is that some of these sums can also constitute the most significant part of the basis $\nbas_2$.

\smallskip

Let us introduce the notation for the functions representing the sums from \eqref{key_decomposition}:
\begin{eqnarray*}
&\sigma_k: \nebas \to \mathbb{Z}, \quad &\sigma_k(v) = \sum_{i=0 \sbnd k-1}\rho_{a_2a_1^iv},\\
&\sigma^m: \nebas \to \mathbb{Z}, \quad &\sigma^m(v) = \sum_{j=0 \sbnd m-1}\rho_{va_1^ja_2},\\
&\sigma_k^m: \nebas \to \mathbb{Z}, \quad &\sigma_k^m(v) =
\sum_{\substack{i=0 \sbnd k-1, \\ j=0 \sbnd m-1}}\rho_{a_2a_1^iva_1^ja_2}.
\end{eqnarray*}
In this notation, the first case of Lemma~\ref{lemm:repr} can be reformulated as
\begin{equation}
\label{shorthand-formula}
\rho_{a_1^kva_1^m} \eqv \rho_v - \sigma_k(v) - \sigma^m(v) + \sigma_k^m(v).
\end{equation}
If we extend the definition of these notations to
\[\sigma_0(v) = \sigma^0(v) = \sigma_0^m(v) = \sigma_k^0(v) = 0,\]
then the equation \eqref{shorthand-formula} also holds for $k$ or $m$ being equal to $0$.

\medskip

Now, with the use of the counting functions $\sigma_k^m(v)$ we are ready to describe the basis $\nbas_2$, which will be used for the shorthand representations of any input counting function.

\begin{prop} \label{prop:main}
    Let \[\Sigma_2 = \{\sigma_{k}^{m}(v) \mid v \in \nebas, k, m > 0\}\]
    and let \[\abas_2 = \{\rho_{a_1^k} \mid k \geq 0\}.\]
    Then, $\nbas_2 = \Sigma_2 \cup \abas_2$ is a basis of $\widehat{\mathcal{C}}_2$. Furthermore,
    for an input function $f$ there exists a representation of $f$ in basis $\nbas_2$
    with size $O(\|f\|)$ that can be obtained in time $O(\|f\|)$.
    % This short representation should be constructed effectively, not just exist.
    % We need to mention that somehow, and maybe we need to exclude the length from this statement.
\end{prop}

\medskip

The proof of the Proposition \ref{prop:main} is the main challenge of this paper. It is split into several lemmas and constitutes the Subsection~\ref{basis-proof}. Once such a representation is obtained, one can apply the procedure $N$ to it to check whether the input function is bounded, and due to the size of the representation being $O(\|f\|)$ the running time of $N$ will be $O(\|f\|)$ for integer coefficients and $O(\|f\|\log \|f\|)$ for rational coefficients, which satisfies
Theorem~\ref{mainb}.

\subsection{Basis proof}
\label{basis-proof}

Now, we prove that $\nbas_2$ is a basis for $\widehat{\mathcal{C}}_2$. First, we show that both sets $\abas_2$ and $\Sigma_2$ are linearly independent. Then, we will prove that $\spann(\abas_2)\cap \spann(\Sigma_2)=\emptyset$. Finally, we will demonstrate how any elementary counting function $\rho_w \in \mathcal{C}_2$ can be represented by an equivalent linear combination of functions from $\nbas_2$, proving that this set spans $\widehat{\mathcal{C}}_2$ and thus is a basis.

\begin{lemma} \label{lemm:a2indep}
    $\abas_2$ is a linearly independent set.
\end{lemma}
\begin{proof}
    It follows from Theorem~\ref{basis_thm} that the set $\abas_2' = \{a_2^k \mid k \geq 0\}$ is a subset
    of a basis for $\widehat{\mathcal{C}}_2$, thus this set is linearly independent. By symmetry, it follows that the set $\abas_2$ is also linearly independent, as it can be obtained from the set $\abas_2'$ by swapping letters
    $a_1$ and $a_2$ in the alphabet~$A_2$.
\end{proof}

Proving that the set $\Sigma_2$ is linearly independent is more involved, but also brings more insights into the structure of $\widehat{\mathcal{C}}_2$.

\begin{lemma} \label{lemm:s2indep}
    $\Sigma_2$ is a linearly independent set.
\end{lemma}

\smallskip

First, we prove that the functions $\sigma_k^m(v)$ for different words $v$ are linearly independent.

\begin{lemma} \label{lemm:sigma_subspaces}
    Suppose $v_1, v_2 \in \nebas$ are fixed words such that $v_1 \neq v_2$, then the sets $\{\sigma_k^m(v_1) \mid k, m > 0\}$ and $\{\sigma_k^m(v_2) \mid k, m > 0\}$ span in $\widehat{C}_2$ two subspaces which have a trivial intersection.
\end{lemma}
\begin{proof}
    Recall from the definition of $\sigma$ that
    \[\sigma_k^m(v) =
    \sum_{\substack{i=0 \sbnd k-1 \\ j=0 \sbnd m-1}}\rho_{a_2a_1^iva_1^ja_2},\]
    therefore for any $k, m > 0$ it holds that
    \begin{align*}
     &\sigma_k^m(v_1) \in V_1 = \spann(\{\rho_{a_2a_1^iv_1a_2^ja_2} \mid i, j \geq 0\}), \\
     &\sigma_k^m(v_2) \in V_2 = \spann(\{\rho_{a_2a_1^iv_2a_2^ja_2} \mid i, j \geq 0\}).   
    \end{align*}
    Now, since $v_1 \neq v_2$, it follows that the subspaces $V_1$ and $V_2$ are spanned by disjoint subsets of the
    basis $B_2$, thus they intersect trivially, therefore
    $\{\sigma_k^m(v_1) \mid k, m > 0\}$ and $\{\sigma_k^m(v_2) \mid k, m > 0\}$ also span subspaces which intersect trivially.
\end{proof}

Up to now we have established that $\Sigma_2$ is a direct sum of the sets $\{\sigma_k^m(v) \mid k, m > 0\}$, where $v$ runs over all words in $\nebas$. It remains to show that for any fixed word $v$ each individual set $\{\sigma_k^m(v) \mid k, m > 0\}$ is linearly independent.

\smallskip

\begin{lemma} \label{lemm:sigma_fixed_word}
    Fix a word $v \in \nebas$ and let $\Sigma_2^v$ be the set
    $\{\sigma_k^m(v) \mid k, m > 0\}$. Then, $\Sigma_2^v$~is~a linearly independent set.
\end{lemma}

Let us fix a word $v \in \nebas$ and denote by $\Sigma_2^v$ the set
$\{\sigma_k^m(v) \mid k, m > 0\}$. Remarkably, the elements of $\Sigma_2^v$ can naturally be 
graphically depicted as follows. Let us consider a $2$-dimensional 
weighted multiset of rectangles parallel to the coordinates, which we call a \emph{histogram}.

\begin{defn} \label{def:hist_elem}
    Let $\rho_w$ be the elementary counting function for the word $w = a_2a_1^iva_1^ja_2$, where $i,j\geq 0$ and let $\alpha \in \mathbb R$. The histogram representation of the counting function $\alpha\rho_w$ is a $1\times 1$ square with its bottom left corner in point $(i, j)$ and the weight equal to $\alpha$.
\end{defn}

This definition leads to the histogram representation of the counting function $\sigma_k^m(v)$.

\begin{defn} \label{def:hist_sigma}
    The histogram representation of the function $\alpha\sigma_k^m(v)$ is a sum of
    the histogram representations of $\{\alpha\rho_{a_2a_1^iva_1^ja_2} \mid 0 \leq i < k, 0 \leq j < m\}$,
    or an origin-based rectangle with corners in $(0, 0)$, $(k, 0)$, $(0, m)$ and $(k, m)$ and the weight equal to $\alpha$.
\end{defn}

The unit square and origin-based rectangle histograms can be seen in Figure \ref{fig:hist}.

\tikzset{every picture/.style={line width=0.75pt}} %set default line width to 0.75pt        

\medskip

\begin{figure}[H]
    \centering

\begin{tikzpicture}[x=0.75pt,y=0.75pt,yscale=-1,xscale=1]
%uncomment if require: \path (0,300); %set diagram left start at 0, and has height of 300

%Shape: Axis 2D [id:dp9642862600522031] 
\draw  (64,240.29) -- (314,240.29)(90.43,11) -- (90.43,263) (307,235.29) -- (314,240.29) -- (307,245.29) (85.43,18) -- (90.43,11) -- (95.43,18)  ;
%Shape: Rectangle [id:dp6538557501722433] 
\draw   (90.43,60) -- (280.2,60) -- (280.2,240.29) -- (90.43,240.29) -- cycle ;
%Shape: Rectangle [id:dp41932254448963535] 
\draw   (260.43,60) -- (280.2,60) -- (280.2,80) -- (260.43,80) -- cycle ;
%Straight Lines [id:da4584367809802856] 
\draw    (280.2,240.29) -- (280.2,250) ;
%Straight Lines [id:da8583014536570914] 
\draw    (80.2,60) -- (90.43,60) ;

% Text Node
\draw (73,13.4) node [anchor=north west][inner sep=0.75pt]    {$y$};
% Text Node
\draw (294,245.4) node [anchor=north west][inner sep=0.75pt]    {$x$};
% Text Node
\draw (62,49.8) node [anchor=north west][inner sep=0.75pt]    {$k$};
% Text Node
\draw (276,251.4) node [anchor=north west][inner sep=0.75pt]    {$m$};
% Text Node
\draw (174,140.4) node [anchor=north west][inner sep=0.75pt]    {$\sigma_k^m$};
% Text Node
\draw (297.83,34) node [anchor=south] [inner sep=0.75pt]    {$\rho_{a_2a_1^{k-1}va_1^{m-1}a_2}$};
% Text Node
\draw (157.22,64.4) node [anchor=north] [inner sep=0.75pt]    {};
% Text Node
\draw (260.43,63.4) node [anchor=north] [inner sep=0.75pt]    {};
% Text Node
\draw (277.65,165.2) node [anchor=east] [inner sep=0.75pt]    {};
% Connection
\draw    (287.77,38.4) -- (270.71,58.74) ;
\draw [shift={(269.43,60.27)}, rotate = 309.98] [color={rgb, 255:red, 0; green, 0; blue, 0 }  ][line width=0.75]    (10.93,-3.29) .. controls (6.95,-1.4) and (3.31,-0.3) .. (0,0) .. controls (3.31,0.3) and (6.95,1.4) .. (10.93,3.29)   ;

\end{tikzpicture}
    
    \caption{Histogram example for the function $\sigma_k^m$.}
    \label{fig:hist}
\end{figure}

Finally, consider an arbitrary linear combination $f_v = \sum_{i=1}^{n}\alpha_i\sigma_{k_i}^{m_i}(v)$
of elements of $\Sigma_2^v$. Without loss of generality, we assume that each pair $(k_i, m_i)$ is distinct.
The histogram representation of $f_v$ is, naturally, the sum of histograms for every summand $\alpha_i\sigma_{k_i}^{m_i}(v)$.
The histogram for each summand histogram is a weighted rectangle by Definition \ref{def:hist_sigma}.
Suppose that this linear combination is trivial, or the weight of the sum histogram is zero at all points.
We now prove that the weight of every summand rectangle histogram must also be zero.

\smallskip

Let us denote by $H(x, y)$ the sum weight at the square with coordinates $(x, y)$, and
let us denote by $h(x, y)$ the weight of the summand rectangle with the extent $(x, y)$.
In terms of functions, $H(x, y)$ is the projection of $f_v$ onto $\rho_{a_2a_1^xva_1^ya_2}$, and
$h(x, y)$ is the coefficient of $\sigma_x^y(v)$ in the linear combination representation of $f_v$.
Note that from the histogram definition it follows that
\[ H(x, y) = \sum_{x' \geq x,\, y' \geq y}h(x', y'). \]

\begin{lemma} \label{lemm:hist}
    $H \equiv 0$ if and only if $h \equiv 0$.
\end{lemma}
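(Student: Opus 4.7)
The plan is to prove this equivalence by a two-dimensional discrete Möbius inversion. The direction $h \equiv 0 \Rightarrow H \equiv 0$ is immediate from the definition of $H$ as a sum of values of $h$. For the nontrivial direction, I would establish the inversion identity
\[
h(x,y) = H(x,y) - H(x+1,y) - H(x,y+1) + H(x+1,y+1),
\]
which expresses each value of $h$ as a finite linear combination of values of $H$ and thus forces $h \equiv 0$ whenever $H \equiv 0$.

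To derive this identity I would substitute the definition of $H$ into the right-hand side and compute the coefficient with which a given $h(x',y')$ appears in the result. Writing $[P]$ for the indicator of a statement $P$, this coefficient equals
\[
\bigl([x' \geq x] - [x' \geq x+1]\bigr)\bigl([y' \geq y] - [y' \geq y+1]\bigr),
\]
which is $1$ exactly when $(x',y') = (x,y)$ and $0$ otherwise. This is just the standard inclusion-exclusion corresponding to the product order on $\mathbb{N}^2$.

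Before doing the inversion, I would note that all the tail sums defining $H$ are well-defined: the weights $h$ come from finitely many encoded terms of $D'|_{V_i}$, so $h$ has finite support, and hence each $H(x,y)$ is a finite sum and the algebraic manipulations above are unproblematic. Alternatively, one could bypass the identity and run a downward induction from a bounding corner $(x_0,y_0)$ outside the support of $h$: the equation $h(x_0,y_0) = H(x_0,y_0) = 0$ handles the base case, and each subsequent value of $h$ is recovered from the corresponding value of $H$ together with finitely many already-known $h$-values further up and to the right. The main obstacle, if one can call it that, is purely notational — keeping careful track of which corner of the rectangle $[0,x]\times[0,y]$ carries the weight $h(x,y)$ so that the inclusion-exclusion signs align as stated.
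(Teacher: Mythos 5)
Your proof is correct, and your primary route differs from the one in the paper. The paper handles the nontrivial direction by a descending lexicographic induction on $(x,y)$: it starts at a maximal corner $(x_0,y_{00})$ of the support, where $h(x_0,y_{00})=H(x_0,y_{00})=0$, and then peels off one point at a time using $H(x,y)=h(x,y)+\sum_{(x',y')\neq(x,y)}h(x',y')$ with all other summands already known to vanish. You instead write down the closed-form inversion
$h(x,y)=H(x,y)-H(x+1,y)-H(x,y+1)+H(x+1,y+1)$
and verify by inclusion--exclusion that the coefficient of $h(x',y')$ on the right collapses to the indicator of $(x',y')=(x,y)$; this recovers $h$ from $H$ in one step and makes the conclusion immediate. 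Both arguments rest on the finite support of $h$ (you state this explicitly, the paper uses it implicitly when choosing the maximal corner), and your ``alternative'' downward induction from a bounding corner is essentially the paper's proof. Your inversion identity has the added benefit of exhibiting an explicit left inverse of the map $h\mapsto H$, which makes the linear independence of the rectangle functions (used afterwards to conclude that $B_V'$ is a basis) slightly more transparent, whereas the paper's induction is more elementary and matches the algorithmic style of the rest of the argument. Either way the lemma is established.
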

\begin{proof}
    If $h \equiv 0$, then clearly so is $H$. Now, assume that $H \equiv 0$.
    Then we prove that $h \equiv 0$ by descending induction on $x$ and then $y$.

    %\paragraph{Base case}
    As the base case we take the maximum value of $x_0$ over all summand rectangles, and the maximum $y_{00}$ value for $x = x_0$. According to maximality, the only rectangles containing $(x_0, y_{00})$ are those which have exactly those coordinates,
    thus $h(x_0, y_{00}) = H(x_0, y_{00}) = 0$, and the base case is proven.

    %\paragraph{Inductive step}
    Now let us consider some $(x, y)$ and let us assume that by the induction hypothesis
    for all $(x', y')$ such that $x' > x$ or $x' = x$ and $y' > y$ it holds that
    $h(x', y') = 0$. Then,
    \[ 0 = H(x, y) = \sum_{x' \geq x,\, y' \geq y}h(x', y') = h(x, y) +  0 = h(x, y), \]
    thus $h(x, y) = 0$ and the induction step is proven.
\end{proof}

Now, from this we can immediately derive the desired linear independences.

\begin{proof}[Proof of Lemma~\ref{lemm:sigma_fixed_word}]
    From Lemma~\ref{lemm:hist} it follows that the histogram corresponding
    to any linear combination of functions
    $\sigma_{k}^{m}(v) \in \Sigma_2^v$
    is trivial (i.e., $H \equiv 0$) if and only if all coefficients are equal to zero (i.e., $h \equiv 0$),
    thus $\Sigma_2^v$ is indeed a linearly independent set.
\end{proof}

\begin{proof}[Proof of Lemma~\ref{lemm:s2indep}]
     The set $\Sigma_2^v$ is linearly independent for every word $v \in \nebas$
     by Lemma~\ref{lemm:sigma_fixed_word},
     and the subspaces spanned by $\Sigma_2^v$ intersect trivially by Lemma~
     \ref{lemm:sigma_subspaces}, therefore the set
     $\Sigma_2 = \bigcup_{v \in \nebas}\Sigma_2^v$ is linearly independent.
\end{proof}

\medskip

Now that we have proven that $\abas_2$ and $\Sigma_2$ are linearly independent sets, we will show that they span linearly independent subspaces (or that their union $\nbas_2$ is linearly independent).

\begin{lemma} \label{lemm:b2indep}
    The span of $\abas_2$ lies in the span of the set
    \[B_2|_{\abas_2} = \{\rho_{a_2a_1^ka_2}  \mid  k \geq 0\} \cup \{\rho_{a_2}\} \cup \{\rho_{\varepsilon}\} \subset B_2\]
    and the span of $\Sigma_2$ lies in the span of the set
    \[B_2|_{\Sigma_2} = \{\rho_{a_2a_1^kva_1^ma_2}  \mid  v \in \nebas, k, m \geq 0\} \subset B_2.\]
\end{lemma}
\begin{proof}
    The statement of the Lemma~for $\abas_2$ follows immediately from equation \eqref{key_decomposition_degenerate}
    in the definition of $D$ for the words $a_1^k$. The statement of the Lemma~for $\Sigma_2$ follows from the definition of $\sigma_k^m(v)$.
\end{proof}

Now note that $B_2|_{\abas_2}$ and $B_2|_{\Sigma_2}$ are disjoint subsets of the basis $B_2$, thus their spans
are linearly independent and the spans of $\abas_2$ and $\Sigma_2$ are also linearly independent.

\medskip

Finally, we prove that the set $\nbas_2 = \abas_2 \cup \Sigma_2$ spans the whole space $\widehat{\mathcal{C}}_2$.

\begin{lemma} \label{lemm:b2span}
    The elementary function $\rho_w$ for each word $w \in M_2$ is representable
    as a linear combination of elements of $\nbas_2$.
    Furthermore,
    \begin{enumerate}[label=\arabic*.]
        \item If $w = \varepsilon$ or $w = a_1^k$, then $\rho_w \in \abas_2$.
        \item If $w = a_2$, then $\rho_w \eqv \rho_{\varepsilon} - \rho_{a_1}$.
        \item If $w = a_2a_1^k$ or $w = a_1^ka_2$ with $k > 0$, then $\rho_w \eqv \rho_{a_1^k} - \rho_{a_1^{k+1}}$.
        \item If $w = a_2a_1^ka_2$ with $k \geq 0$, then $\rho_w \eqv \rho_{a_1^k} - 2\rho_{a_1^{k+1}} + \rho_{a_1^{k+2}}$.
        \item If $w = a_1^ka_2a_1^m$ with $k,m > 0$, then
        $\rho_w = -\rho_{\varepsilon} + \rho_{a_1} + \rho_{a_1^k} + \rho_{a_1^m}
        - \rho_{a_1^{k+1}} - \rho_{a_1^{m+1}} + \sigma_k^m(a_2).$
        \item If $w = a_1^ka_2a_1^na_2a_1^m$ with $k,m,n \geq 0$, then
        \[
        \rho_w = \rho_{a_1^n} - 2\rho_{a_1^{n+1}} + \rho_{a_1^{n+2}} - \sigma_k^{n + 1}(a_2)
        + \sigma_k^n(a_2) - \sigma_{n + 1}^m(a_2) + \sigma_n^m(a_2)
        + \sigma_k^m(a_2a_1^na_2).
        \]
        \item Otherwise, $w$ can be written in the form $w = a_1^ka_2a_1^{k'}va_1^{m'}a_2a_1^m$ with $k, m, k', m' \geq 0$ and
        $v \in \nebas$, and
        \[
        \rho_w =
        \sigma_{k' + 1}^{m' + 1}(v) - \sigma_{k'}^{m' + 1}(v) - \sigma_{k' + 1}^{m'}(v) + \sigma_{k'}^{m'}(v)
        -\]\[
        - \sigma_k^{m' + 1}(a_2a_1^{k'}v)
        + \sigma_k^{m'}(a_2a_1^{k'}v)
        - \sigma_{k' + 1}^m(va_1^{m'}a_2)
        + \sigma_{k'}^m(va_1^{m'}a_2) +
        \]\[
        + \sigma_k^m(a_2a_1^{k'}va_1^{m'}a_2).
        \]
    \end{enumerate}
\end{lemma}
\begin{proof}
    Case 1 trivially follows from the definition of $\abas_2$.
    Case 2 is proven by applying the left extension relation to $\rho_{a_2}$.

    \smallskip
    
    Case 3 is proven by applying the left extension relation to $\rho_{a_2a_1^k}$ and the right extension
    relation to $\rho_{a_1^ka_2}$ respectively.
    
    \smallskip
    
    Case 4 is proven by applying the right extension relation to $\rho_{a_2a_1^ka_2}$ and then Case~3
    to the two summands as follows:
    \[\rho_{a_2a_1^ka_2} \eqv \rho_{a_2a_1^k} - \rho_{a_2a_1^{k+1}} \eqv 
    (\rho_{a_1^k} - \rho_{a_1^{k+1}}) - (\rho_{a_1^{k+1}} - \rho_{a_1^{k+2}}) = 
    \rho_{a_1^k} - 2\rho_{a_1^{k+1}} + \rho_{a_1^{k+2}}.\]
    
    \medskip

    Now, Cases 5-7 are the three different cases of how the representation \eqref{shorthand-formula}
    is further decomposed into a combination of $\nbas_2$ elements.
    
    \smallskip

    Case 5 is the simpler of the three. In this case the equivalence \eqref{shorthand-formula} takes the form
    
    \[\rho_{a_1^ka_2a_1^m} \eqv \rho_{a_2} - \sigma_k(a_2) - \sigma^m(a_2) + \sigma_k^m(a_2) =
    \rho_{a_2} - \sum_{i = 0 \sbnd k-1}\rho_{a_2a_1^ia_2} - \sum_{i = 0 \sbnd m-1}\rho_{a_2a_1^ia_2} +
     \sigma_k^m(a_2).\]

    We substitute the first summand using Case 2 and we substitute all the elements in the expanded
    sums using Case 4, obtaining
    \[(\rho_{\varepsilon} - \rho_{a_1}) - \sum_{i = 0 \sbnd k-1}(\rho_{a_1^i} - 2\rho_{a_1^{i + 1}} +
    \rho_{a_1^{i + 2}}) - 
    \sum_{i = 0 \sbnd m-1}(\rho_{a_1^i} - 2\rho_{a_1^{i + 1}} +
    \rho_{a_1^{i + 2}}) + \sigma_k^m(a_2) =\]\[=
    (\rho_{\varepsilon} - \rho_{a_1}) -
    (\rho_{\varepsilon} - \rho_{a_1} - \rho_{a_1^k} + \rho_{a_1^{k+1}}) -
    (\rho_{\varepsilon} - \rho_{a_1} - \rho_{a_1^m} + \rho_{a_1^{m+1}})  + \sigma_k^m(a_2) =\]\[ =
    -\rho_{\varepsilon} + \rho_{a_1} + \rho_{a_1^k} + \rho_{a_1^m}
        - \rho_{a_1^{k+1}} - \rho_{a_1^{m+1}} + \sigma_k^m(a_2)\]

    \smallskip

    Now, in Cases 6 and 7 we run into the problem of representing $\sigma_k(v)$ and $\sigma^m(v)$, as they are not part of the identified basis $\nbas_2$. The key insight which will allow us to do that lies in the structure of the word $v$.
    Let us begin with the simpler Case~6. Here, the equivalence takes the form 
    \[\rho_{a_1^ka_2a_1^na_2a_1^m} \eqv
    \rho_{a_2a_1^na_2} - \sigma_k(a_2a_1^na_2) - \sigma^m(a_2a_1^na_2) + \sigma_k^m(a_2a_1^na_2) \eqv\]\[    \eqv (\rho_{a_1^n} - 2\rho_{a_1^{n+1}} + \rho_{a_1^{n+2}}) - \sigma_k(a_2a_1^na_2) - \sigma^m(a_2a_1^na_2) + \sigma_k^m(a_2a_1^na_2),
    \]
    where we have used Case 4 to further decompose the first summand. Now note that the 
    argument word $a_2a_1^na_2$ is both of the form
    $v'a_1^na_2$ (with $v' = a_2$) and of the form $a_2a_1^nv''$ (with $v'' = a_2$ as well).
    This fact allows us to represent $\sigma_k(a_2a_1^na_2)$ as a linear
    combination of elements of $\Sigma_2^{v'}$ in the following way:
    \begin{equation}\label{col-dec}
    \sigma_k(a_2a_1^na_2) = \sigma_k(v'a_1^na_2) = 
    \end{equation}
    \[
    = \sum_{i=0 \sbnd k-1}\rho_{a_2a_1^iv'a_1^na_2} =
    \sum_{\substack{i=0 \sbnd k-1, \\ j=0 \sbnd n}}\rho_{a_2a_1^iv'a_1^ja_2} - 
    \sum_{\substack{i=0 \sbnd k-1, \\ j=0 \sbnd n - 1}}\rho_{a_2a_1^iv'a_1^ja_2} = \]
    \[ =
    \sigma_k^{n+1}(v') - \sigma_k^n(v') = \sigma_k^{n+1}(a_2) - \sigma_k^n(a_2).
    \]
    The function $\sigma_k(a_2a_1^na_2)$ is represented as a linear
    combination of elements of $\Sigma_2^{v''}$ in an analogous fashion:
    \begin{equation}\label{row-dec}
    \sigma^m(a_2a_1^na_2) = \sigma^m(a_2a_1^nv'') = 
    \end{equation}
    \[ = \sum_{i=0 \sbnd m-1}\rho_{a_2a_1^nv''a_1^ia_2} =
    \sum_{\substack{j=0 \sbnd n, \\ i=0 \sbnd m-1}}\rho_{a_2a_1^jv''a_1^ia_2} - 
    \sum_{\substack{j=0 \sbnd n-1, \\ i=0 \sbnd m-1}}\rho_{a_2a_1^jv''a_1^ia_2} = \]
    \[ =
    \sigma_{n+1}^m(v'') - \sigma_n^m(v'') = \sigma_{n+1}^m(a_2) - \sigma_n^m(a_2).
    \]

    Substituting equations \eqref{col-dec} and \eqref{row-dec} with the equivalence \eqref{shorthand-formula} yields the desired representation for Case 6:
    \[\rho_{a_1^ka_2a_1^na_2a_1^m} \eqv (\rho_{a_1^n} - 2\rho_{a_1^{n+1}} + \rho_{a_1^{n+2}}) - \sigma_k(a_2a_1^na_2) - \sigma^m(a_2a_1^na_2) + \sigma_k^m(a_2a_1^na_2) =\]\[ =
    \rho_{a_1^n} - 2\rho_{a_1^{n+1}} + \rho_{a_1^{n+2}} - \sigma_k^{n + 1}(a_2)
        + \sigma_k^n(a_2) - \sigma_{n + 1}^m(a_2) + \sigma_n^m(a_2)
        + \sigma_k^m(a_2a_1^na_2).\]

    Finally, in Case 7 the same idea is generalized even further. Here, the equivalence \eqref{shorthand-formula} takes the form
    \[\rho_{a_1^ka_2a_1^{k'}va_1^{m'}a_2a_1^m} \eqv
    \rho_{a_2a_1^{k'}va_1^{m'}a_2} - \sigma_k(a_2a_1^{k'}va_1^{m'}a_2) - \sigma^m(a_2a_1^{k'}va_1^{m'}a_2) + \sigma_k^m(a_2a_1^{k'}va_1^{m'}a_2).\]
    Here we can use the formulas \eqref{col-dec} and \eqref{row-dec} to decompose the second and third summands by taking $v' = a_2a_1^{k'}v$
    and $v'' = va_1^{m'}a_2$:
    \[\sigma_k(a_2a_1^{k'}va_1^{m'}a_2) = \sigma_k^{m' + 1}(a_2a_1^{k'}v) - \sigma_k^{m'}(a_2a_1^{k'}v),\]
    \[\sigma^m(a_2a_1^{k'}va_1^{m'}a_2) = \sigma_{k' + 1}^m(va_1^{m'}a_2) - \sigma_{k'}^m(va_1^{m'}a_2).\]
    The only summand left is $\rho_{a_2a_1^{k'}va_1^{m'}a_2}$. Although this elementary function belongs
    to the previous basis $B_2$, in the new basis $\nbas_2$ it has to be further represented. However,
    we can use a similar technique twice to obtain such a representation:
    \begin{equation}\label{point-dec}
    \rho_{a_2a_1^{k'}va_1^{m'}a_2} = \sum_{i=0 \sbnd k'}\rho_{a_2a_1^iva_1^{m'}a_2} -
    \sum_{i=0 \sbnd k' - 1}\rho_{a_2a_1^iva_1^{m'}a_2} =
    \end{equation}
    \[ = \sigma_{k' + 1}(va_1^{m'}a_2) - 
    \sigma_{k'}(va_1^{m'}a_2) =(\sigma_{k' + 1}^{m' + 1}(v) - \sigma_{k' + 1}^{m'}(v)) - (\sigma_{k'}^{m' + 1}(v) - \sigma_{k'}^{m'}(v)) =\]\[= \sigma_{k' + 1}^{m' + 1}(v) - \sigma_{k'}^{m' + 1}(v) - \sigma_{k' + 1}^{m'}(v) + \sigma_{k'}^{m'}(v).\]

    Substituting the first three summands in the equivalence using equations \eqref{col-dec}, \eqref{row-dec} and \eqref{point-dec} leaves us with the desired representation of 
    $\rho_{a_1^ka_2a_1^na_2a_1^m}$ from Case 7.
\end{proof}

\smallskip

Finally, we are ready to conclude the proof of Proposition \ref{prop:main}.

\begin{proof}[Proof of Proposition \ref{prop:main}]
It follows from Lemmas \ref{lemm:a2indep}, \ref{lemm:s2indep} and \ref{lemm:b2indep} that
$\nbas_2 = \abas_2 \cup \Sigma_2$ is a linearly independent set, and from Lemma~\ref{lemm:b2span} it follows that
$\nbas_2$ is a spanning set of $\widehat{\mathcal{C}}_2$, therefore $\nbas_2$ is indeed a basis.
Furthermore, the formulation of Lemma~\ref{lemm:b2span} demonstrates how any elementary counting
function can be represented as a linear combination of a constant number of elements of $\nbas_2$ (actually, no more than $9$) with coefficients $+1$, $-1$ and $-2$. If we apply this decomposition to each summand in a formal sum representing a function $f$, it is clear that the size of the decomposed sum exceeds $\|f\|$ by
no more than a constant multiplicative factor, concluding the proof.
\end{proof}

\subsection{The main algorithm} \label{sec:algo}
We have identified a basis that allows us to obtain a shorthand representation of
any input function $f$, so now we can describe an algorithm that satisfies Theorem~\ref{mainb}.

\smallskip

Let $f$ be represented by a linear combination $\sum_{i=1}^{n} \alpha_i\rho_{w_i}$, where each summand is encoded as a word-coefficient pair.

We encode each summand consisting of a coefficient $\alpha$ and an elementary counting function
$\rho_w$ as a pair $(\alpha, w)$, and we encode the function $\sigma_k^m(v)$ with a coefficient $\alpha$ as a compressed tuple $\langle \alpha, k, m, v\rangle$. Since we have proven that $\nbas_2$ is a basis, we can use a mixed list of pairs $(\alpha, a_1^n)$ and tuples as input to the procedure $N$, where tuples and pairs can be processed separately, as the functions they represent are located in disjoint subspaces. Furthermore, this compressed tuple form is what allows us to avoid the size increase from double summations \--- the size of the tuple $\langle \alpha, k, m, v\rangle$ is proportional to just $\|\alpha\| + k + m + |v|$.

\medskip

We replace each pair $(\alpha_i, w_i)$ from the input function by other pairs and tuples in one of the following ways based on the word $w_i$, mirroring
the cases from Lemma~\ref{lemm:b2span}:
\begin{enumerate}[label=\arabic*.]
    \item If $w_i = \varepsilon$ or $w_i = a_1^k$, we leave the pair as is. 
    \smallskip
    
    \item Else if $w_i = a_2$, we replace by $(\alpha_i, \varepsilon)$, $(-\alpha_i, a_1)$.
   \smallskip
    
    \item Else if $w_i = a_2a_1^k$ or $w_i = a_1^ka_2$ with $k > 0$,
    we replace by $(\alpha_i, a_1^k)$, $(-\alpha_i, a_1^{k+1})$.
    \smallskip
        
    \item Else if $w_i = a_2a_1^ka_2$ with $k \geq 0$,
    we replace by $(\alpha_i, a_1^k)$, $(-2\alpha_i, a_1^{k+1})$, $(\alpha_i, a_1^{k+2})$.
    \smallskip
    
    \item Else if $w_i = a_1^ka_2a_1^m$ with $k,m > 0$,
    we replace by
    $(-\alpha_i, \varepsilon)$, $(\alpha_i, a_1)$, $(\alpha_i, a_1^k)$, $(\alpha_i, a_1^m)$,
    $(-\alpha_i, a_1^{k+1})$, $(-\alpha_i, a_1^{m+1})$,
    $\langle\alpha_i, k, m, a_2\rangle$.
    \smallskip
    
    \item Else if $w_i = a_1^ka_2a_1^na_2a_1^m$ with $k,m,n \geq 0$,
    we replace by
    $(\alpha_i, a_1^n)$, $(-2\alpha_i, a_1^{n+1})$, $(\alpha_i, a_1^{n+2})$,
    $\langle -\alpha_i, k, n + 1, a_2\rangle$, $\langle \alpha_i, k, n, a_2\rangle$,
    $\langle -\alpha_i, n + 1, m, a_2\rangle$, $\langle \alpha_i, n, m, a_2\rangle$,\\
    $\langle\alpha_i, k, m, a_2a_1^na_2\rangle$.
    \smallskip
    
    \item Otherwise, $w_i = a_1^ka_2a_1^{k'}va_1^{m'}a_2a_1^m$
    with $k,m,k',m' \geq 0$ and $v \in \nebas$,
    and we replace by
    $\langle\alpha_i, k' + 1, m' + 1, v\rangle$,
    $\langle-\alpha_i, k', m' + 1, v\rangle$,
    $\langle-\alpha_i, k' + 1, m', v\rangle$,
    $\langle\alpha_i, k', m', v\rangle$,
    $\langle -\alpha_i, k, m' + 1, a_2a_1^{k'}v\rangle$, $\langle \alpha_i, k, m', a_2a_1^{k'}v\rangle$,
    $\langle -\alpha_i, k' + 1, m, va_1^{m'}a_2\rangle$, $\langle \alpha_i, k', m, va_1^{m'}a_2\rangle$,
    $\langle\alpha_i, k, m, a_2a_1^{k'}va_1^{m'}a_2\rangle$.
\end{enumerate}

% @TODO: why remove "pattern matching"?
Finally, we apply the procedure $N$ to the obtained list. If the result is a trivial sum, then we conclude that $f$ is bounded; otherwise, we conclude that it is not.
Scanning the input and matching each word $w_i$ to one of the 7 cases above clearly takes $O(\|f\|)$ time. 
Now note that the size of each new pair or tuple is proportional to the size of the pair that has been substituted, and each pair is substituted by no more than $9$ new elements (each of which can be a tuple or a pair).
Thus, the size of the input to the invocation of $N$ is $O(\|f\|)$. Hence, according to Lemma~\ref{lemm:normtime}, the running time of the procedure $N$ is $O(\|f\|)$ for integer coefficients and $O(\|f\|\log \|f\|)$ for rational coefficients, thereby establishing Theorem~\ref{mainb} for the case $M_2$.

\section{Generalization for $M_r$, $r \geq 2$}
For the general case of the monoid $M_r$ with an arbitrary $r \geq 2$, a similar method is also applicable.
Here we omit detailed proofs, since the core ideas in the general case are identical while the equations and exact formulations are more cumbersome.

\medskip

First, let us extend the basis decomposition from Lemma~\ref{lemm:repr} to the general case.

\begin{lemma} \label{lemm:reprn}
    For any word $w = a_1^kva_1^m \in M_r$ with $v \in \nebasr$
    it holds that $\rho_w \eqv D_r(\rho_w)$, where  
     \begin{equation}
        \label{key_decomposition_n}
        D_r(\rho_w)=\rho_v - \sum_{\substack{i = 0 \sbnd k-1, \\ x \in A_r \setminus \{a_1\}}}\rho_{xa_1^iv} -
        \sum_{\substack{j = 0 \sbnd m-1, \\ x \in A_r \setminus \{a_1\}}}\rho_{va_1^jx} +
        \sum_{\substack{i = 0 \sbnd k-1, \\ j = 0 \sbnd m-1, \\ x_1, x_2 \in A_r \setminus \{a_1\}}}\rho_{x_1a_1^iva_1^jx_2} \quad \text{for } k, m > 0 
        \end{equation}
    \begin{equation}
        \label{key_decomposition_k0_n}
        D_r(\rho_w)=\rho_v - \sum_{\substack{i = 0 \sbnd m-1, \\ x \in A_r \setminus \{a_1\}}}\rho_{va_1^ix} \quad \text{for } k = 0 \text{ and } m > 0,
    \end{equation}
    \begin{equation}
        \label{key_decomposition_m0_n}
        D_r(\rho_w)=\rho_v - \sum_{\substack{i = 0 \sbnd k-1, \\ x \in A_r \setminus \{a_1\}}}\rho_{xa_1^iv} \quad \text{for } k > 0 \text{ and } m = 0,
    \end{equation}
    and $D_r(\rho_w) = \rho_v = \rho_w$ for $k=m=0$.
\end{lemma}
\begin{proof}
    The proof is identical to the proof of Lemma~\ref{lemm:repr}.
\end{proof}

Note that this representation is exactly the same as the one for the case of $M_2$, but wherever we had the letter $a_2$ delimiting the word, in the general case one has a summation over all the letters of
$A_r \setminus \{a_1\}$. As in the case of $M_2$, we also extend the definition of $D_r$ to all
the other words in $M_r$.

\begin{lemma} \label{lemm:repr_degeneraten}
    For every word $w = a_1^k \in M_r$ it holds that $\rho_w \eqv D_r(\rho_w)$, where 
    \begin{equation}
        \label{key_decomposition_degenerate_n}
        D_r(\rho_{w}) = \rho_{\varepsilon} - \sum_{x \in A_r \setminus \{a_1\}}k\rho_{x} + \sum_{\substack{i=0 \sbnd k-2, \\ x_1, x_2 \in A_r \setminus \{a_1\}}}(k - i - 1)\rho_{x_1a_1^ix_2}.
    \end{equation}
\end{lemma}
\begin{proof}
    The proof is identical to the proof of Lemma~\ref{lemm:repr_degenerate}.
\end{proof}

\smallskip

The decomposition \eqref{key_decomposition_degenerate_n} allows us to extend the naive algorithm from Section \ref{sec:naive} to the general case of $M_r$. The only difference is in the running time: in the decomposition $D_r$ each summand of the form $\rho_{x_1a_1^iva_1^jx_2}$ is included $(r-1)^2$ times, which gives
the estimated size $O(r^2n^3)$. Therefore, the running time of the procedure
$N$ when applied to this representation can only be estimated as
$O(r^3n^3)$ for integer coefficients and $O(r^3n^3\log(rn))$ for rational coefficients.

\medskip

In the following, we will show that a shorthand basis exists in the general case. The notation for the sums from \eqref{key_decomposition_n} is defined quite similarly to Section \ref{sec:decomp}, but now we need to introduce new letter parameters $x$, $x_1$ and $x_2$.

\begin{eqnarray*}
&\sigma_k: (\nebasr) \times (A_r \setminus \{a_1\}) \to \mathbb{Z}, \quad &\sigma_k(v, x) = \sum_{i = 0 \sbnd k-1}\rho_{xa_1^iv},\\
&\sigma^m: (\nebasr) \times (A_r \setminus \{a_1\}) \to \mathbb{Z}, \quad &\sigma^m(v, x) = \sum_{j = 0 \sbnd m-1}\rho_{va_1^jx},\\
&\sigma_k^m: (\nebasr) \times (A_r \setminus \{a_1\})^2 \to \mathbb{Z}, \quad &\sigma_k^m(v, x_1, x_2) =
\sum_{\substack{i = 0 \sbnd k-1, \\ j = 0 \sbnd m-1}}\rho_{x_1a_1^iva_1^jx_2}.
\end{eqnarray*}

Note that in the case of $M_2$ the set $A_r \setminus \{a_1\}$ degenerates into $\{a_2\}$, thus
the letters $x, x_1$ and $x_2$ do not appear as parameters for $\sigma_k, \sigma^m$ and $\sigma_k^m$
as they can only have the one value.

\smallskip

Now, in this notation the first case of Lemma~\ref{lemm:reprn} can be reformulated as

\[\rho_{a_1^kva_1^m} \eqv \rho_v -
\sum_{x \in A_r\setminus\{a_1\}}\sigma_k(v, x) -
\sum_{x \in A_r\setminus\{a_1\}}\sigma^m(v, x) +
\sum_{x_1, x_2 \in A_r\setminus\{a_1\}}\sigma_k^m(v, x_1, x_2),\]
where there are only $O(r^2)$ shorthand summands. Now, we will describe the basis of 
$\widehat{\mathcal{C}}_r$ using the aforementioned sums.

\begin{prop} \label{prop:nmain}
    Let \[\Sigma_r = \bigl\{\sigma_{k}^{m}(v, x_1, x_2) \mid v \in \nebasr, x_1, x_2 \in A_r \setminus \{a_1\}, k, m > 0\bigr\}\]
    and let \[\abas_r = \bigl\{ \rho_{\varepsilon} \bigr\} \cup
    \bigl\{ \rho_x \mid x \in A_r \setminus \{a_2\} \bigr\} \cup
    \bigl\{ \rho_{x_1a_1^kx_2} \mid x_1, x_2 \in A_r \setminus \{a_2\}, k \geq 0\bigr\}.\]
    Then, $\nbas_r = \Sigma_r \cup \abas_r$ is a basis of $\widehat{\mathcal{C}}_r$. Furthermore,
    for an input function $f$ there exists a representation of $f$ in basis $\nbas_r$
    having size $O(\|f\|)$ that can be obtained in time $O(\|f\|)$.
\end{prop}

The general structure of the proof is the same as in the case of $M_2$.

\begin{lemma} \label{lemm:anindep}
    $\abas_r$ is a linearly independent set.
\end{lemma}
\begin{proof}
    The proof is similar to the proof of Lemma~\ref{lemm:a2indep}. It follows from Theorem~\ref{basis_thm}
    that the set
    \[\abas_r' = \bigl\{ \rho_{\varepsilon} \bigr\} \cup
    \bigl\{ \rho_x \mid x \in A_r \setminus \{a_2\} \bigr\} \cup
    \bigl\{ \rho_{x_1vx_2} \mid x_1, x_2 \in A_r \setminus \{a_2\}, v \in M_r\bigr\}\]
    is a basis of $\widehat{\mathcal{C}}_r$, and $\abas_r \subset \abas_r'$, therefore
    $\abas_r$ is a linearly independent set.
\end{proof}

\begin{lemma} \label{lemm:snindep}
    $\Sigma_r$ is a linearly independent set.
\end{lemma}

\begin{proof}
First, note that the functions $\sigma_k^m(v, x_1, x_2)$ are linearly independent for different
combinations of $v, x_1$ and $x_2$. The proof of this fact is identical to the proof of Lemma~\ref{lemm:sigma_subspaces},
with the difference being the two additional degrees of freedom in the choice of $x_1$ and $x_2$.

\smallskip

Now, let $\Sigma_r^{(v,x_1,x_2)}$ be the set $\{\sigma_k^m(v, x_1, x_2) \mid k, m > 0\}$ where
$v \in \nebasr$ and $x_1, x_2 \in A_r \setminus \{a_1\}$.
Then, what is left to prove the linear independence of the set $\Sigma_r$ is to prove the linear
independence of the set $\Sigma_r^{(v,x_1,x_2)}$ for any fixed word $v$ and letters $x_1, x_2$.
Fortunately, the proof of Lemma~\ref{lemm:sigma_fixed_word} is fully applicable here as each of the
sets $\Sigma_r^{(v,x_1,x_2)}$ also has only two degrees of freedom in $k$ and $m$, and therefore
is representable by a \textit{histogram}. 
\end{proof}

\begin{lemma} \label{lemm:bnindep}
    The span of $\abas_r$ lies in the span of the set
    \[B_r|_{\abas_r} = \bigl\{\rho_{x_1a_1^kx_2} \mid x_1, x_2 \in A_r \setminus \{a_1\}, k \geq 0\bigr\} \cup \bigl\{\rho_{a} \mid a \in A_r \setminus \{a_1\}\bigr\} \cup \bigl\{\rho_{\varepsilon}\bigr\} \subset B_r\]
    and the span of $\Sigma_r$ lies in the span of the set
    \[B_r|_{\Sigma_r} = \bigl\{\rho_{x_1a_1^kva_1^mx_2} \mid v \in \nebasr, x_1, x_2 \in A_r \setminus \{a_1\}, k, m \geq 0\bigr\} \subset B_r.\]
\end{lemma}
\begin{proof}
    The statement about $\abas_r$ follows immediately from the equation \eqref{key_decomposition_degenerate_n} and the statement about $\Sigma_r$ follows immediately from the definition of $\sigma_k^m(v, x_1, x_2)$.
\end{proof}

Similarly to the case of $M_2$, here the sets $B_r|_{\abas_r}$ and $B_r|_{\Sigma_r}$ are also
disjoint subsets of the basis $B_r$, and therefore the set $\nbas_r = \abas_r \cup \Sigma_r$ is also
an independent set.

\medskip

Finally, we show that the set $\nbas_r$ spans $\widehat{\mathcal{C}}_r$.

\begin{lemma} \label{lemm:bnspan}
    The elementary function $\rho_w$ for each word $w \in M_r$ is representable
    as a linear combination of elements of $\nbas_r$.
    Furthermore,
    \begin{enumerate}[label=\arabic*.]
        \item If $w = \varepsilon$ or $w = y$ or $w = y_1a_1^ky_2$ with $k \geq 0$ and $y, y_1, y_2 \in A_r \setminus \{a_2\}$, then $\rho_w \in \abas_r$.
        
        \smallskip
        
        \item If $w = a_2$, then $\rho_w \eqv \rho_{\varepsilon} - \sum_{x \in A_r \setminus \{a_2\}}\rho_{x}$.
        
        \smallskip
        
        \item If $w = a_2a_1^ky$ \ or \ $w = ya_1^ka_2$ with $k \geq 0$ and $y \in A_r \setminus \{a_2\}$ in both cases, then \\ 
        \[\rho_w \eqv \rho_{a_1^ky} - \sum_{x \in A_r \setminus \{a_2\}}\rho_{xa_1^{k}y} \textit{ \, or \, } \rho_w \eqv \rho_{ya_1^k} - \sum_{x \in A_r \setminus \{a_2\}}\rho_{ya_1^{k}x} \quad \text{respectively.}\]
        
        \smallskip
        
        \item If $w = a_2a_1^ka_2$ with $k \geq 0$, then
        \[\rho_w \eqv \rho_{a_1^k}
        - \sum_{x \in A_r \setminus \{a_2\}}\rho_{xa_1^k}
        - \sum_{x \in A_r \setminus \{a_2\}}\rho_{a_1^kx}
        + \sum_{x_1, x_2 \in A_r \setminus \{a_2\}}\rho_{x_1a_1^kx_2}.\]
        
        \smallskip
        
        \item If $w = a_1^kya_1^m$ with $k,m > 0$ and $y \in A_r \setminus \{a_1, a_2\}$, then
        \[\rho_w = -\rho_y + \rho_{a_1^ky} + \rho_{ya_1^m} +
        \sum_{x_1, x_2 \in A_r \setminus \{a_1\}}\sigma_k^m(y, x_1, x_2).\]
        
        \smallskip
        
        \item If $w = a_1^ka_2a_1^m$ with $k,m > 0$, then
        \[\rho_w = -\rho_{\varepsilon} + \sum_{x \in A_r \setminus \{a_2\}}\rho_{x} +\]\[ + \rho_{a_1^k} + \rho_{a_1^m} - \sum_{x \in A_r \setminus \{a_2\}}\rho_{a_1^kx} - \sum_{x \in A_r \setminus \{a_2\}}\rho_{xa_1^m} +
        \sum_{x_1, x_2 \in A_r \setminus \{a_1\}}\sigma_k^m(a_2, x_1, x_2).\]
        
        \smallskip
        
        \item If $w = a_1^ky_1a_1^ny_2a_1^m$ with $k,m,n \geq 0$ and $y_1, y_2 \in A_r \setminus \{a_1\}$, then
        \[
        \rho_w \eqv \rho_{y_1a_1^ny_2} - \]
        \[
         - \sum_{x \in A_r \setminus \{a_2\}}\sigma_k^{n + 1}(y_1, x, y_2)
        + \sum_{x \in A_r \setminus \{a_2\}}\sigma_k^n(y_1, x, y_2) - \]\[
        - \sum_{x \in A_r \setminus \{a_2\}}\sigma_{n + 1}^m(y_2, y_1, x) +
        \sum_{x \in A_r \setminus \{a_2\}}\sigma_n^m(y_2, y_1, x) +\] \[
        + \sum_{x_1, x_2 \in A_r \setminus \{a_1\}}\sigma_k^m(y_1a_1^ny_2, x_1, x_2),
        \]
        where $\rho_{y_1a_1^ny_2}$ is further decomposed by Cases 1, 3 or 4.

        \item Otherwise, $w$ can be written in the form $w = a_1^ky_1a_1^{k'}va_1^{m'}y_2a_1^m$ with $k, m, k', m' \geq 0$,
       $y_1, y_2 \in A_r \setminus \{a_1\}$, and
        $v \in \nebasr$, and
        \[
        \rho_w =
        \sigma_{k' + 1}^{m' + 1}(v, y_1, y_2) - \sigma_{k'}^{m' + 1}(v, y_1, y_2) - \sigma_{k' + 1}^{m'}(v, y_1, y_2) + \sigma_{k'}^{m'}(v, y_1, y_2)
        -\]\[
        - \sum_{x \in A_r \setminus \{a_2\}}\sigma_k^{m' + 1}(y_1a_1^{k'}v, x, y_2)
        + \sum_{x \in A_r \setminus \{a_2\}}\sigma_k^{m'}(y_1a_1^{k'}v, x, y_2)
        -\]\[
        - \sum_{x \in A_r \setminus \{a_2\}}\sigma_{k' + 1}^m(va_1^{m'}y_2, y_1, x)
        + \sum_{x \in A_r \setminus \{a_2\}}\sigma_{k'}^m(va_1^{m'}y_2, y_1, x) +
        \]\[
        + \sum_{x_1, x_2 \in A_r \setminus \{a_1\}}\sigma_k^m(y_1a_1^{k'}va_1^{m'}y_2, x_1, x_2).
        \]
    \end{enumerate}
\end{lemma}

All cases of Lemma~\ref{lemm:bnspan} can be proven similarly to the cases in Lemma~\ref{lemm:b2span}.

\smallskip

We have established that $\nbas_r$ is indeed a basis of $\widehat{\mathcal{C}}_r$, and from Lemma~\ref{lemm:bnspan}
it follows that any elementary counting function $\rho_w$ can be represented in terms of this basis
in a compressed form with the size not exceeding $r^2\|\rho_w\|$. Therefore, if we use an algorithm
similar to the one described in Section \ref{sec:algo} with the cases from Lemma~\ref{lemm:bnspan}
we can obtain a basis representation of any counting function $f$ with the size not exceeding $O(r^2\|f\|)$,
and thus we can check the boundedness of $f$ in $O(r^3\|f\|)$ time for integer coefficients and
$O(r^3\|f\|\log(r\|f\|))$ time for rational coefficients by Lemma~\ref{lemm:normtime}. This concludes
the proof of Theorem~\ref{mainb} in the general case.

\section{Acknowledgements}
The authors are indebted to the anonymous referee for helpful comments on an earlier draft of this paper and a suggested connection to the boundedness problem for weighted automata. 
The work of the second author was prepared within the framework of the HSE University Basic Research Program. Both authors are members of the research group which won the Junior Leader competition of BASIS foundation.

%\newpage


\begin{thebibliography}{12}

\bibitem{abk}
S. Almagor, U. Boker, O. Kupferman,
\emph{What’s Decidable About Weighted Automata?},
Information and Computation, Vol. 282, January 2022, 104651.

\bibitem{bes-fuj}
M. Bestvina and K. Fujiwara,
\emph{Bounded cohomology of subgroups of mapping class groups},
Geom. Topol., 6 (2002), pp. 69--89.

\bibitem{brooks}
R. Brooks, \emph{Some remarks on bounded cohomology}, In: Riemann Surfaces and Related Topics: Proceedings of the 1978 Stony Brook Conference, Annals of Mathematics Studies, Princeton University Press, 1980, pp. 53--63.

\bibitem{clmpw}
W. Czerwiński, E. Lefaucheux, F. Mazowiecki, D. Purser, M.A. Whiteland, \emph{The boundedness and zero isolation problems for weighted automata over nonnegative rationals}, 37th Annual ACM/IEEE Symposium on Logic in Computer Science, LICS 2022, Beer-Sheva, Israel, August 2–5, 2022,  https://doi.org/10.1145/3531130.

\bibitem{cro-ryt}
M. Crochemore and W. Rytter, \emph{Jewels of Stringology: Text Algorithms},
World Scientific, 2002, ISBN 9810248970, 310 pages.

\bibitem{eps-fuji}
D.~B.~A. Epstein and K.~Fujiwara, \emph{The second bounded cohomology of word-hyperbolic groups},
Topology, 36(6), 1997, pp. 1275--1289.

\bibitem{glaz}
J. Glaz, J. Naus, S. Wallenstein, \emph{Scan statistics}, 2009, Springer-Verlag, New York, ISBN 978-0-8176-4748-3, 394 pages.

%\bibitem{har-schw}
%T.~Hartnick and P.~Schweitzer,
%\emph{On quasioutomorphism groups of free groups and their transitivity properties}, Journal of Algebra, 450 (2016), pp. 242--281.

\bibitem{ggd}
T.~Hartnick and A.~Talambutsa,
\emph{Relations between counting functions on free groups and free monoids}, Groups Geom. Dyn. 12 (2018), no. 4, pp. 1485--1521.

\bibitem{ht-sbornik}
T.~Hartnick and A.~Talambutsa, \emph{Efficient computations with counting functions on free groups and free monoids}, Sbornik: Mathematics, 214(10), 2023, pp. 1458--1499.

\bibitem{grigorch}
R. I. Grigorchuk, \emph{Some results on bounded cohomology}, In: Combinatorial and geometric group theory (Edinburgh, 1993),  London Math. Soc. Lecture Note Ser., 204, Cambridge Univ. Press, Cambridge, 1995, pp. 111--163.

\bibitem{alt}
P.~Kiyashko, \emph{Bases for counting functions on free monoids and groups},
arXiv:2306.15520 [math.GR]

\bibitem{loth}
M. Lothaire, \emph{Combinatorics on words}, 
Cambridge University Press, 2009, ISBN 0521599245, 260 pages.

\bibitem{sapir}
M.V. Sapir, \emph{Combinatorial Algebra: Syntax and Semantics}, Springer Monographs in Mathematics, 2014, ISBN 978-3-319-08030-7, 355 pages.


\end{thebibliography}
\end{document}